\newtheorem{theorem}{Theorem}
\newtheorem*{theoremn}{Theorem 1}
\newtheorem*{definition}{Definition}
\newcommand{\bra}[1]{\langle #1|}
\newcommand{\ket}[1]{|#1\rangle}
\newcommand{\ketbra}[2]{| #1 \rangle \langle #2 |}
\newcommand{\M}[1]{\mathcal{#1}}
\newcommand{\id}{\mathbb{I}}
\newcommand{\B}[1]{\mathbb{#1}}
\newcommand{\Tr}{\text{Tr}}
\newcommand{\N}[1]{\Vert #1 \Vert}
\begin{document}
\title{Non-Markovianity Quantifier of an Arbitrary Quantum Process} 
\author{Tiago Debarba} 
\affiliation{Universidade Tecnol\'ogica Federal do Paran\'a (UTFPR), Campus Corn\'elio Proc\'opio, Avenida Alberto Carazzai 1640, Corn\'elio Proc\'opio, Paran\'a 86300-000, Brazil}
\email{debarba@utfpr.edu.br}
\author{Felipe F. Fanchini}
\affiliation{Faculdade de Ci\^encias, UNESP - Universidade Estadual Paulista, Bauru, S\~ao Paulo 17033-360, Brazil}
\email{fanchini@fc.unesp.br}
\date{\today}
\begin{abstract}
Calculating the degree of non-Markovianity of a quantum process, for a high-dimensional system, is a difficult task given complex maximization problems. Focusing on the entanglement-based measure of non-Markovianity we propose a numerically feasible quantifier 
for finite-dimensional systems. We define the non-Markovianity measure in terms of a class of entanglement quantifiers named witnessed entanglement which allow us to write several entanglement based measures of non-Markovianity in a unique formalism. 
In this formalism, we show that the non-Markovianity, in a given time interval, can be witnessed by calculating the expectation value of an observable, making it attractive for experimental investigations. Following this property we introduce a quantifier base on the 
entanglement witness in an interval of time, we show that measure is a {\it bona fide} measure of non - Markovianity. In our example, we use the generalized robustness of entanglement, an entanglement measure that can be readily calculated by a semidefinite programming method, to study impurity atoms coupled to a bose-einstein condensate. 
\end{abstract}

\pacs{03.65.Yz, 42.50.Lc, 03.65.Ud}
\maketitle

\section{Introduction}

Recently, non-Markovian processes have been receiving special attention in the literature \cite{breuer16}. In fact, the understanding of non-Markovian process is important for practical applications such as communication protocols, quantum algorithms, and quantum cryptography \cite{chuang}. Much effort has been devoted to the characterization and quantification of non-Markovianity in dissipative process. Several measures have been presented, and among the most evident, we can mention the measure based on the trace distance defined by Breuer, Laine, and Pillo \cite{breuer09}, the measure based on entanglement dynamics defined by Rivas, Huelga, and Plenio \cite{plenioprl10}, and the measure based on the dynamics of mutual information defined by Luo, Fu, and Song \cite{luo12}.

It is known that, in general, such measures do not agree on the description of dissipative dynamics, differing even when defining a process such as Markovian \cite{nm1,nm2,nm3, neto16}. Moreover, besides these differences, they have a particular and unpleasant feature in common: all are computationally difficult to evaluate because the number of variables involved is hugely increased with the system size. {In one hand, for the great majority of measure we have a maximization over all initial condition, while for the entanglement-based measurement (where the maximal entangled state is choose as default) we have the problem to calculate the amount of entanglement per se.}
This makes the study of non-Markovian processes an extremely difficult task, and it is exactly in this direction that we develop our work by presenting an efficient new strategy.

Focusing on the non-Markovianity measure based on entanglement dynamics, we introduce, for an arbitrary quantum process and independent of the system size, a quantifier of non-Markovianity based on the dynamics of the generalized robustness of entanglement. This entanglement quantifier can be written according to the entanglement witness approach \cite{brandaopra2005}, which allows its evaluation in an efficient manner through robust semidefinite programming (RSDP) \cite{rsdpreinaldo}. {In a sense, the witnessed entanglement approach generalize the usual entanglement-based measure of quantum non-Markovianity, since several entanglement quantifiers can be written in an unique formalism}. 
{Furthermore, we demonstrate how non-Markovianity can be witnessed directly in the laboratory 
in therms of the expectation value of a hermitian operator}.
These results open huge prospects, allowing the study of non-Markovian dynamics for arbitrary dimensional systems as theoretical and experimental points of view.

\section{ Entanglement - based measure} 
Given a quantum process described by the quantum channel $\Phi_t\in\M{C}(\B{C}_S)$, where $\M{C}(\B{C}_S)$ is the set of complete positive and trace preserving (CPTP) channels on the Hilbert space $\B{C}_S$. The action of the channel over the density matrix of the system $\rho_S\in\M{D}(\B{C}_S)$, where $\M{D}(\B{C}_S)$ is the set of positive semi-definite operators on $\B{C}_S$, results in the time evolution of the system given by:
\begin{equation}\label{map}
\Phi_t(\rho_S) = \sum_i K_i(t)\rho_s K_i^{\dagger}(t) = \rho_S(t),
\end{equation}
where $\{K_i\}_i$ are the Kraus operators and $\sum_i K_i^{\dagger}K_i = \id_S$.
A sufficient criterion to measure the non-Markovianity of the process $\Phi_t$ can be quantified by the positive variation of the entanglement between the system and a purification ancillary system \cite{plenioprl10}:
\begin{equation}\label{nm1}
\M{N}(\Phi_t) = \int_{\frac{dE}{dt}>0} \frac{dE(\rho_{S:A}(t))}{dt} dt,
\end{equation}
where $E$ is an entanglement measure and $\rho_{S:A}(t))\in\M{D}(\B{C}_S\otimes\B{C}_A)$. It is important to emphasize that the original measure proposed by Rivas, Huelga and Plenio \cite{plenioprl10} assumes that the initial state is maximally entangled. So, following the original prescription we define  

\begin{equation}\label{theorem}
\M{N}(\Phi_t) = \int_{\frac{dE}{dt}>0} \frac{dE(\Phi_t\otimes \id [{\phi_{+}}])}{dt} dt,
\end{equation}
where $\phi_{+} = \ketbra{\phi_{+}}{\phi_{+}}$, with $\ket{\phi_{+}} = \frac{1}{\sqrt{d}}\sum_j \ket{j}_S\ket{j}_A\in\B{C}_S\otimes\B{C}_A$, for $d = dim(S) = dim(A)$ and $\{\ket{j}_X\}_{j = 1}^d$ is an orthonormal basis on $\B{C}_X$.

{Since our measure will be based on the knowledge of the quantum process, it is important to emphasize that the density matrix dynamics determines unequelly the quantum process and vice-versa. Indeed, one could be calculated from the other by means of the {\it Choi - Jamiolkowski isomorphism}} 

\begin{equation}\label{choi}
J(\Phi_t) =  \Phi_t\otimes \id [{\phi_{+}}],
\end{equation}  
{where  $J({\Phi_t})\in\M{D}(\B{C}_S\otimes\B{C}_A)$ is the Choi operator. 
As the map is acting locally over the maximally entangled state, the density matrix that characterize the 
dynamics of the composed system $\rho_{S:A}(t)\in\M{D}(\B{C}_S\otimes\B{C}_A)$ is indeed equal to the Choi operator of the map $\Phi_t$:
\begin{equation}
\rho_{S:A}(t)= J(\Phi_t).  
\end{equation}
The isomorphism of Choi - Jamiolkowski guarantee that the Choi operator $J(\Phi_t)$ is unique, therefore 
there be a linear bijection between $J(\Phi_t)$ and the map $\Phi_t$, which implies that: 
\begin{equation}\label{bij}
\rho_{S:A}(t) \longleftrightarrow  \Phi_t,
\end{equation}
where $\rho_{S:A}(t) = \Phi_t\otimes \id [{\phi_{+}}_{S:A}]$ and $\Phi_t(X_S) = \Tr_A (J(\Phi_t) X_S\otimes \id_A)$, 
for any bounded operator $X_S:\B{C}_S\to\B{C}_S$.}

{Following the approach proposed in Ref.\cite{sabrina17}, 
the variation of the spectral decomposition of 
the operator $J_{\Phi}$  can characterize non - Markovian 
interval of times. 
In this work we present how it is possible 
to quantify the non - Markovianity of $\Phi_t$ by means of the 
expectation value of a witness of {\it non - Markovianity}. 
The witness is defined calculating the  
variation of the entanglement of $J({\Phi}_t)\in\M{D}(\B{C}_S\otimes\B{C}_A)$, 
based on the entanglement witness approach, introduced in the next section.  }

\section{Witnessed Entanglement Approach}
In the study of non-Markovianity process by means of quantum entanglement, the dimension of the composed system, i.e., physical system plus an ancillary system, is defined by the square of the dimension of the former. In this sense, it makes necessary a quantifier of entanglement that can be evaluated even for large dimensional bipartite systems. For this purpose, we use the {\it entanglement witness} approach, which holds for any composed systems \cite{horodeckiew}. Entanglement witnesses are observables that characterize the entanglement of the system \cite{terhal00}. It is defined as a Hermitian operator $W$ acting on $\B{C}_A\otimes \B{C}_B$ such that for an entangled state $\rho\in\M{D}(\B{C}_A\otimes \B{C}_B)$, it holds that:
\begin{eqnarray}
\Tr(W\rho) &<& 0 \\
\Tr(W\sigma) &\geq& 0 \quad\forall\quad \sigma\in Sep.
\end{eqnarray}
where $Sep$ is the set of separable states in $\M{D}(\B{C}_A\otimes\B{C}_B)$. 

One important point about the entanglement witness regards its optimization. A given entanglement witness is named {\it optimal} if it is tangent to the set of separable states \cite{oew} and provides a notion about the distance between the state and the border of the separable set. This notion can be approached to quantify the entanglement of the system. Then it is possible to define a class of entanglement quantifiers named {\it witnessed entanglement} \cite{brandao06,brandaopra2005}, which defines a necessary and sufficient conditions of separability \cite{brprl04}.
A witnessed entanglement quantifier can be written as:
\begin{equation}\label{weq}
E_w(\rho) = \max \left\{0, - \min_{W\in\M{W}} \Tr(W\rho) \right\},
\end{equation}
where $\M{W}$ is the set of entanglement witnesses that distinguishes one measure from another, which means that each set $\M{W}$ results in a different entanglement quantifier. {This result is particularly interesting since several entanglement quantifiers can be written in the witnessed entanglement form,  {once that it defines a class of entanglement monotone quantifiers \cite{brandaopra2005,brandao06}}. In a sense, it generalizes the usual entanglement-based measure of quantum non-Markovianity.} 

An entanglement {monotone} quantifier that can be written in the witnessed entanglement form is the Generalized Robustness of entanglement, that, for a bipartite state $\rho\in \M{D}(\B{C}_A\otimes\B{C}_B)$, is defined as \cite{steiner03}:
\begin{equation}\label{rg}
R_G(\rho) = \min_{\sigma, s\in \Re_{+}}\left\{s: \frac{\rho + s \sigma}{s+1} \in Sep \right\},
\end{equation}
where $\sigma$ is a state that, mixed with $\rho$, results in a separable state. 
Restrictions on the state $\sigma$ in generalized robustness result in different entanglement measures. For example, for $\sigma\in Sep$, it is named \emph{Robustness of Entanglement}, and for $\sigma$ equal to the maximal mixture state, it is named \emph{Random Robustness} \cite{vida99}. Unlike generalized robustness and robustness of entanglement, random robustness is not an entanglement monotone \cite{harrow03}.

Generalized robustness of entanglement can be written as a witnessed entanglement measure \cite{brandaopra2005,brandao06} restricting the set of witness operators by $\M{W} = \{W\in\M{W} | W\leq \id \}$. Therefore:
\begin{equation}\label{rgwn}
R_G(\rho) = \max \left\{0, - \min_{W\leq \id} \Tr(W\rho) \right\}.
\end{equation}
The optimization problem of finding an optimal entanglement witness can be approached as a robust semidefinite program \cite{brprl04,brpra2004}, which implies that the optimization problems in Eq.(\ref{rg}) and Eq.(\ref{rgwn}) attach the same solution, giving equivalent formulation for the same optimization problem \cite{boyd}.

\section{Quantification of non-Markovianity}  
By means of the Generalized Robustness of entanglement, we finally define an evaluable measure of non-Markovianity.
\begin{definition}[Non-Markovianity]
Given the generalized robustness of entanglement $R_G$, we can define a quantifier of non-Markovianity of a process $\Phi_t\in\M{C}(\B{C}_s)$ as
\begin{equation}\label{nm}
\M{N}_{R_G}(\Phi_t) = \int_{\frac{d R_G}{dt}>0} \frac{d R_G(\Phi_t\otimes \id [{\phi_{+}}])}{dt} dt,
\end{equation}
where the integration is taken over all points such that the variation of the entanglement is positive. These intervals of time indicate the non-Markovianity of the process.
\end{definition}

The great advantage of this measure is that both crucial properties of generalized robustness are extended to the quantifier of non-Markovianity in Eq.(\ref{nm}): first, it can be calculated by RSDP for any dimension \cite{brprl04} and, second, it can be written in the entanglement witness form that permits a direct characterization of the non-Markovianity in the laboratory. We can realize the later statement by rewriting the non-Markovianity measure as:
\begin{equation}
\M{N}_{R_G}(\Phi_t) = \int_{\delta R_G>0} \delta R_G(\Phi_t\otimes \id [{\phi_{+}}]) = \sum_i \Delta R_G(i),
\end{equation}
where $\Delta R_G(i) = R_G(t_{i_2}) - R_G(t_{i_1})>0$, for $t_{i_2}>t_{i_1} \,\forall\, i\in\B{N}_1$. For a given entangled state $\rho(t)$, the generalized robustness in witnessed entanglement form is given by:
\begin{equation}
R_G(t) = - \Tr[W_t\rho(t)], 
\end{equation}
where $W_t$ is the optimal entanglement witness for $\rho(t)=\Phi_{t}\otimes\id({\phi_{+}})$. For two intervals of time $t_2>t_1$, the variation of the entanglement holds
\begin{align} \nonumber
\Delta R_G &= R_G(t_2)-R_G(t_1)\\ \nonumber
&= - \left\{ \Tr[W_{t_2}\rho(t_2)] -\Tr[W_{t_1}\rho(t_1)] \right\}\\ \nonumber
& = - \left\{ \Tr[W_{t_2}(\Phi_{t_2}\otimes\id)({\phi_{+}})] -\Tr[W_{t_1}(\Phi_{t_1}\otimes\id)({\phi_{+}})] \right\}\\ \nonumber
& = - \left\{ \Tr([(\Phi^{\dagger}_{t_2}\otimes\id)(W_{t_2})-(\Phi^{\dagger}_{t_1}\otimes\id)(W_{t_1})]\phi_{+})\right\}\\
& = -  \Tr(W(t_2,t_1)\phi_{+}) \label{witnmar}
\end{align}
where $\phi_{+}=\ketbra{\phi_{+}}{\phi_{+}}$, and $W(t_2,t_1) = [(\Phi^{\dagger}_{t_2}\otimes\id)(W_{t_2})-(\Phi^{\dagger}_{t_1}\otimes\id)(W_{t_1})]$ is Hermitian, once $W=W^{\dagger}$. 
Therefore, the variation of the entanglement in each non-Markovian interval $(t_1,t_2)$ can be obtained in the laboratory, measuring the expectation value of the observable $W(t_2,t_1)$ in the maximally entangled state. 

{It is important to emphasize that  Eq.\eqref{witnmar} requires knowledge of the dissipative process, and its characterization usually demand a quantum process tomography. Although by bijection described in Eq.\eqref{bij}, 
the knowledge of the state in each instant of time $t_1,t_2$ guarantees the construction of the map, and 
applicability of Eq.\eqref{witnmar}. This statement can be well understood by means of the representation of 
a entanglement witness $W$ as a positive channel $\Lambda$ \cite{witnesshorodecki}:  
\begin{equation}\label{witmap}
W = \id\otimes\Lambda(\phi_{+}),
\end{equation}   
where $\Lambda$ is positive and trace preserving channel. Therefore, considering $W_{t}$ the optimal 
entanglement witness of a state $\rho(t) = \Phi_{t}\otimes\id({\phi_{+}})$, its expectation value is:
\begin{align}\label{a}
\Tr\left\{  W_{t}\rho(t) \right\} & = \Tr\left\{  W_{t}(\Phi_{t}\otimes\id({\phi_{+}})) \right\} \\\label{b}
& = \Tr\left\{ \phi_{+}(\Phi_{t}\otimes\Lambda({\phi_{+}})) \right\} \\\label{c}
& = \bra{\phi_{+}} \id\otimes\Lambda(\rho_t)\ket{\phi_{+}}, 
\end{align} 
where in Eq.\eqref{b} we used the cyclic property  of the trace and the hermiticity of the entanglement witness.
This result implies that the observable $W(t_2,t_1)$, defined in Eq.\eqref{witnmar}, can be written just in terms
of the states of the system in each instant of time $\{t_1,t_2\}$:
\begin{equation}
W(t_2,t_1) =\id\otimes\Lambda_{t_2}(\rho(t_2)) -  \id\otimes\Lambda_{t_1}(\rho(t_1)),  
\end{equation} 
where $\Lambda_{t_2}$ and $\Lambda_{t_1}$ are the positive operators that characterize the generalized 
robustness in the instants $t_2$ and $t_1$ respectively. 
Therefore, knowing the observable $W(t_2,t_1)$, it is possible to directly measure the diference on the entanglement, in a defined interval of time, characterizing the process about its non-Markovianity in the laboratory. 
It is important to emphasize that the physical implementation of those maps can be performed based on some manipulations and approximations, allowing direct detection of quantum entanglement \cite{PhysRevLett.89.127902, PhysRevA.77.042113, PhysRevLett.107.160401}.
Eq.\eqref{witnmar} creates a new fashion of the characterization and quantification of non-Markovianity as theoretically as experimentally. }

For a given interval of time $(t_1,t_2)$,  where $t_1\leq t_2$, we introduce the {\it witnessed non - Markovianity } : 
\begin{equation}
\M{N}(t_1,t_2) = \max\left\{0, - \bra{\phi_{+}}W(t_2,t_1)\ket{\phi_{+}}\right\}, 
\end{equation}
where $W(t_2,t_1) = [(\Phi^{\dagger}_{t_2}\otimes\id)(W_{t_2})-(\Phi^{\dagger}_{t_1}\otimes\id)(W_{t_1})]$. As $\M{N}(t_1,t_2)$ is not 
zero only for $R_G(t_2)-R_G(t_1)>0$, it is faithful in a given entanglement increasing interval of time. Indeed for a function of 
a quantum process to be a {\it bona fide measure} of non-Markovianity it must satisfy some properties \cite{wolf08,cirac08}:
\begin{enumerate}
\item Faithfulness; 
\item Computability; 
\item Operational or physical interpretation.
\item Invariant under unitary;  
\item \label{prop} Continuity; 
\end{enumerate} 
$\M{N}(t_1,t_2)$ satisfies properties (1) - (3) by its definition based on the entanglement witness of generalized robustness. Property (4) 
comes from the fact that the map is acting locally over the state of the composed system and environment, as $\Phi_{t}\otimes\id({\phi_{+}})$, 
and generalized robustness is invariant under local unitary operations. The continuity is proved by the following theorem: 
\begin{theorem}\label{t1}
Consider two CPTP channel depending on time $\Phi_{t},\Lambda_{t}\in\M{C}(\B{C}_S)$, for $t\in[0,\infty)$, the following inequality holds: 
\begin{equation*}
|\M{N}_{\Phi}(t_1,t_2) - \M{N}_{\Lambda}(t_1,t_2)| \leq d^2 \left( \N{\Phi_{t_1} - \Lambda_{t_1}}_{\diamond}
 + \N{\Phi_{t_2} -\Lambda_{t_2}}_{\diamond} \right),
\end{equation*}
where $d^2=dim(\B{C}_S)dim(\B{C}_A)$, $ \N{\Phi_{t} - \Lambda_{t}}_{\diamond}$ is the diamond distance of $\Phi_{t}$ and $\Lambda_{t}$, that is a continuous function. 
\end{theorem}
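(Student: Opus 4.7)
The plan is to decouple the two time instants, reduce the continuity of $\M{N}$ to a Lipschitz bound for the generalized robustness $R_G$ in the trace norm of states, and then transfer that bound to the diamond norm on channels by means of the Choi--Jamiolkowski correspondence of Eq.~\eqref{bij}.

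First, I would apply the elementary $1$-Lipschitz inequality $|\max\{0,a\}-\max\{0,b\}|\le|a-b|$ together with the identity $-\bra{\phi_{+}}W(t_2,t_1)\ket{\phi_{+}}=R_G(\rho(t_2))-R_G(\rho(t_1))$ read off from Eq.~\eqref{witnmar}. A triangle inequality on the two time labels then reduces the theorem to establishing, for each $t\in\{t_1,t_2\}$,
\begin{equation*}
|R_G(\Phi_{t}\otimes\id[\phi_{+}])-R_G(\Lambda_{t}\otimes\id[\phi_{+}])|\le d^2\,\N{\Phi_{t}-\Lambda_{t}}_{\diamond}.
\end{equation*}

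Second, I would invoke the witnessed-entanglement form Eq.~\eqref{rgwn} of $R_G$. Since a pointwise minimum of linear functionals is $1$-Lipschitz in the trace norm, with constant controlled by the operator norm of the competing witnesses, a standard two-sided optimizer estimate gives
\begin{equation*}
|R_G(\rho)-R_G(\rho')|\le\sup_{W\le\id}\N{W}_{\infty}\,\N{\rho-\rho'}_{1},
\end{equation*}
while the factor $\N{\rho-\rho'}_{1}=\N{(\Phi_{t}-\Lambda_{t})\otimes\id[\phi_{+}]}_{1}\le\N{\Phi_{t}-\Lambda_{t}}_{\diamond}$ is immediate from the definition of the diamond norm.

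The remaining and main step is to bound $\N{W}_{\infty}$ uniformly over the feasible set $\{W\le\id\}$ of entanglement witnesses. The constraint $W\le\id$ caps the largest eigenvalue by $1$ but a priori leaves the smallest eigenvalue unbounded; the witness property closes the gap. Because $\id/d^2$ is separable on $\B{C}_S\otimes\B{C}_A$, evaluating the witness inequality on this state gives $\Tr(W)\ge 0$, so the eigenvalues $\lambda_i\le 1$ satisfy $\sum_i\lambda_i\ge 0$. This forces $\lambda_{\min}\ge -(d^2-1)$ and therefore $\N{W}_{\infty}\le d^2-1\le d^2$. Collecting the per-time bounds at $t_1$ and $t_2$ yields the theorem. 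The only nontrivial ingredient is this eigenvalue estimate, which is exactly what produces the dimensional factor $d^2$; every other step is elementary convex duality plus the defining property of $\N{\cdot}_{\diamond}$.
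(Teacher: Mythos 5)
Your proposal is correct, and its skeleton coincides with the paper's: split by the triangle inequality into a contribution at $t_1$ and one at $t_2$, compare the optimal witnesses of the two channels at a fixed time, apply a H\"older-type trace estimate, and finish with $\N{(\Phi_t-\Lambda_t)\otimes\id(\phi_+)}_1\le\N{\Phi_t-\Lambda_t}_\diamond$. Where you genuinely depart from the paper is in the two places where its argument is loosest, and in both you do better. First, the paper silently discards the $\max\{0,\cdot\}$ in passing from $\M{N}$ to the witness expectation values; your appeal to $|\max\{0,a\}-\max\{0,b\}|\le|a-b|$ makes that step legitimate. Second, and more importantly, the paper's dimensional constant comes from the chain $\Tr(AB)\le\N{W_t}_1\N{\cdot}_1$ followed by the assertion that $W\le\id$ gives $\N{W_t}_1\le\N{\id}_1=d^2$. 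That pairing of two trace norms is not the conjugate H\"older pairing, and the implication $W\le\id\Rightarrow\N{W}_1\le d^2$ fails for a general Hermitian operator bounded above by the identity, since nothing there controls the negative part of the spectrum. You instead use the correct pairing $|\Tr(AB)|\le\N{A}_\infty\N{B}_1$ and close the spectral gap by evaluating the witness on the separable state $\id/d^2$, which yields $\Tr W\ge0$, hence $\lambda_{\min}(W)\ge-(d^2-1)$ and $\N{W}_\infty\le d^2$. This separability argument is precisely the ingredient the paper's last line is missing, and it is what actually delivers the stated constant $d^2$ (note the appendix restates the theorem with a factor $d$, an inconsistency your derivation also resolves in favour of $d^2$). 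In short: same route, but your version supplies the justification that the paper's final step only gestures at.
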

The proof of the theorem is presented in the Appendix. This property has an important role in the dynamics of physical systems, once 
that the hamiltonian of the open system depends on some physical parameters, and then variations on those parameters result in different quantum process.

Now, with the defined measure, we are able to study non-Markovian process for arbitrary systems. In the following section we present the dynamics of two trapped cold atoms, a four level system, showing that in this system there exists a crossover between Markov and non-Makovian process in function of the scattering length $a_E$. 

\section{Impurity Atoms Coupled to a Bose-Einstein Condensate} 

\begin{figure}[t]
\begin{center}
\includegraphics[width=.45\textwidth]{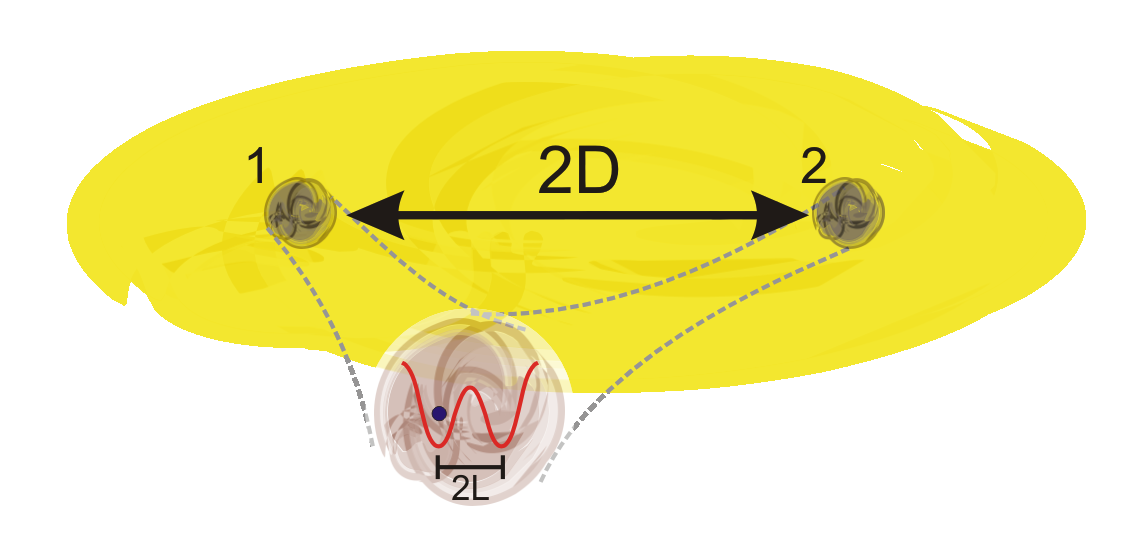}
\end{center}
\caption{(Color Online) An illustrative scheme of the BEC.
The physical system is described by 
two trapped cold atoms (impurities). 
Each atom describes a two-level system trapped in a double well potential, separated by a distance $D$, with an optical superlattice of wavelength $\lambda=4L$. }\label{bec}
\end{figure}

As an example, we study the decoherent dynamics of two trapped cold atoms (impurities), which interact with an ultracold bosonic rubidium gas in a Bose-Einstein condensate (BEC). Each atom is trapped in a double well potential describing, individually, a two-level system in which each base state is associated with the right or left well.
This arrangement has two important characteristics, which can be engineered in the laboratory: the BEC scattering length $a_E$, which is intimately connected with the boson-boson coupling constant to the BEC environment, and the distance $D$ between the impurities. The former modifies the spectral density as a whole and, consequently, it ohmicity. The latter, on the other hand, infers a change similar to that imposed by the spectral density cutoff frequency \cite{sabrina11,sabrina13}.

The dynamics of the two impurities coupled to an ultracold bosonic rubidium gas in a BEC is given by a Lindblad-type master equation with
time-dependent decay rates \cite{Cirone,sabrina13}%
\begin{widetext}
\begin{eqnarray}
\frac{d\rho }{dt} &=&\frac{\gamma _{1}\left( t\right) -\gamma _{2}\left(
t\right) }{2}\left[ \left( \sigma _{z}^{(1)}-\sigma _{z}^{(2)}\right) \rho
\left( \sigma _{z}^{(1)}-\sigma _{z}^{(2)}\right) -\frac{1}{2}\left\{ \left(
\sigma _{z}^{(1)}-\sigma _{z}^{(2)}\right) \left( \sigma _{z}^{(1)}-\sigma
_{z}^{(2)}\right) ,\rho \right\} \right]\nonumber  \\
&&+\frac{\gamma _{1}\left( t\right) +\gamma _{2}\left( t\right) }{2}\left[
\left( \sigma _{z}^{(1)}+\sigma _{z}^{(2)}\right) \rho \left( \sigma
_{z}^{(1)}+\sigma _{z}^{(2)}\right) -\frac{1}{2}\left\{ \left( \sigma
_{z}^{(1)}+\sigma _{z}^{(2)}\right) \left( \sigma _{z}^{(1)}+\sigma
_{z}^{(2)}\right) ,\rho \right\} \right] ,\label{becdyna}
\end{eqnarray}\end{widetext}
where $\sigma _{z}^{(n)}$ is the usual Pauli matrix for the $n$-th atom $\left(
n=1,2\right)$, and%
\begin{widetext}
\begin{eqnarray}
\gamma _{1}\left( t\right)  &=&\frac{g_{SE}^{2}n_{0}}{\hbar \pi ^{2}}%
\int_{0}^{\infty }dkk^{2}e^{-k^{2}\sigma ^{2}/2}\frac{\sin \left( \frac{E_{k}%
}{2\hbar }t\right) \cos \left( \frac{E_{k}}{2\hbar }t\right) }{\left(
\epsilon _{k}+2g_{E}n_{0}\right) }\left( 1-\frac{\sin \left( 2kL\right) }{2kL%
}\right)  \\
\gamma _{2}\left( t\right)  &=&\frac{g_{SE}^{2}n_{0}}{2\hbar \pi ^{2}}%
\int_{0}^{\infty }dkk^{2}e^{-k^{2}\sigma ^{2}/2}\frac{\sin \left( \frac{E_{k}%
}{2\hbar }t\right) \cos \left( \frac{E_{k}}{2\hbar }t\right) }{\left(
\epsilon _{k}+2g_{E}n_{0}\right) }\left( \frac{\sin \left( 2k\left(
D+L\right) \right) }{2k\left( D+L\right) }+\frac{\sin \left( 2k\left(
D-L\right) \right) }{2k\left( D-L\right) }-2\frac{\sin \left( 2kD\right) }{%
2kD}\right) ,\label{gam2}
\end{eqnarray}%
\end{widetext}
where the boson-boson coupling of the BEC environment with scattering length $a_{E}$ and the atomic mass $m_{E}$ is given by $g_{E}=4\pi \hbar ^{2}a_{E}/m_{E}$, and $g_{SE}=2\pi \hbar ^{2}a_{SE}/m_{SE}$ is the coupling between the system and
the environment with scattering length $a_{SE}$ and reduced mass $m_{SE}=m_{S}m_{E}/\left( m_{S}+m_{E}\right) $. Also, the energy of the $k$-th Bogoliubov mode is given by $E_{k}=\sqrt{2\epsilon_{k}n_{0}g_{E}+\epsilon 
_{k}^{2}}$, where $n_{0}$ is the condensate density, $\epsilon _{k}=\hbar
^{2}k^{2}/2m_{E}$ and $\sigma $ is the variance parameter of the lattice site. Here  we consider $^{23}\rm{Na}$ impurity atoms immersed in a $^{87}\rm{Rb}$ condensate with $\lambda =600$ nm and $n_{0}=10^{20}$ m$^{-3}$.
The scattering length of the atoms is $a_{Rb}=99a_{0}$ with the Bohr radius given by $a_{0}$, and $a_{SE}=55a_{0}$. 
Finally, the two engineered variables are the distance between the atoms which is given by $2D\geq 8L$, for 
$L=\lambda/4$, and the scattering length of the BEC environment which is given by $a_E$.

It is important to emphasize that to solve this problem exactly, using either the Breuer, Laine, and Pillo measure (based on trace distance) \cite{breuer09} or the Luo, Fu, and Song measure (based on mutual information dynamics) \cite{luo12} is a very difficult task, given the optimization problem involved. Also, even for entanglement based-measures, it is a difficult task since the state of the system plus ancilla is given by a $16\times16$ density matrix. Our measure, on the other hand, overcomes this difficulty, given the power of the RSDP. In Fig.(\ref{fig1}.a), we plot non-Markovianity $\M{N}_{R_G}$ of the BEC dynamics as a function of the scattering length $a_E$ for a separation $D=4L$ (common environment). We obtain a crossover between Markovian and non-Markovian processes for $a_E\approx 0.045 a_{Rb}$, similar to the result presented in Ref. \cite{sabrina11,sabrina13}.

In Fig.(\ref{fig1}.b), on the other hand, we plot non-Markovianity $\M{N}_{R_G}$ of the BEC dynamics as a function of the boson-boson distance $D$. As we can see, the degree of non-Markovianity does not change with the variation in the boson-boson distance, which diverges from the predictions based on other non-Markovianity measures. Indeed, using the BLP measure, C. Addis et al. \cite{sabrina13} show that small distances have a strong effect on the presence of information backflow, tending to a constant value for larger distances. Note, however, that our result is not pointless, once that for decoherent systems, subjected to dephasing errors (as is our example), the map divisibility does not depend on the cutoff frequency, which is intimately connected with the boson-boson distance, which depends on the spectral density shape.\\

\begin{figure}[t]
\begin{center}
\includegraphics[width=.47\textwidth]{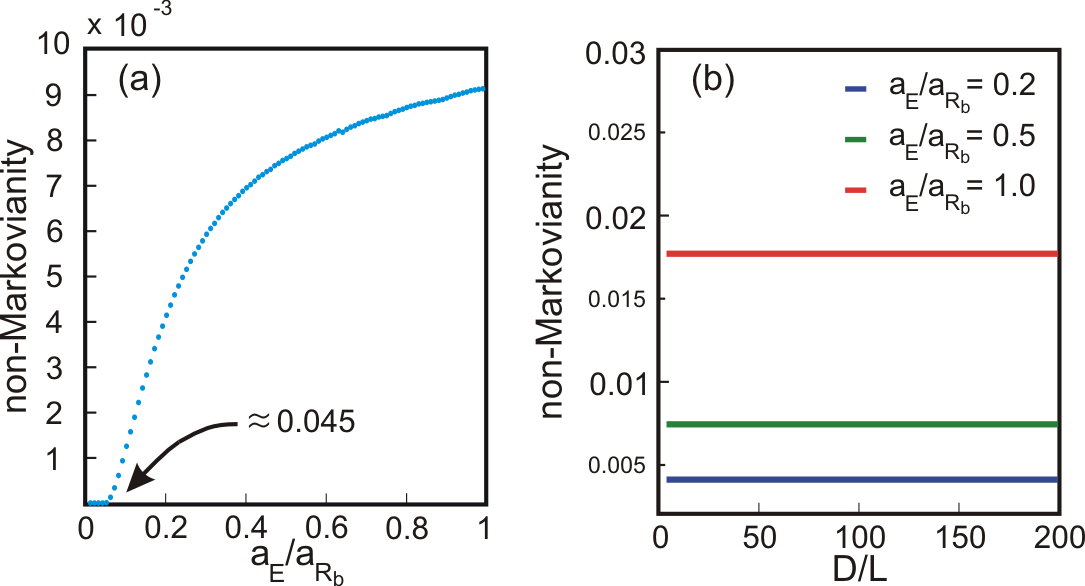} 
\end{center}
\caption{(Color Online) (a) Non-Markovianity $\M{N}_{R_G}$ of BEC dynamics in function of scattering length $a_E/a_{RB}$, for the distance $D = 4L$ (common environment). (b) Non-Markovianity $\M{N}_{R_G}$ of BEC dynamics as a function of the distance $D/L$ for scattering lengths $a_E = 0,2 a_{Rb}$ (blue line); $a_E = 0,5 a_{Rb}$ (green line); and $a_E = 1,0 a_{Rb}$ (red line). For each point, the dynamics for $100$ different instants of time have been calculated, computing the non-Markovianity in Eq.(\ref{nm}). The generalized robustness for each instant of time has been calculated by RSDP. }\label{fig1}
\end{figure}

\section{Conclusion}

We introduce a computable measure of quantum non-Markovianity based on the witnessed entanglement approach. We have shown how a non-Markovian process can be detected by means of an expectation value of an observable, making it attractive for experimental investigations. {In a sense, the witnessed entanglement approach generalize the usual entanglement based measure of quantum non-Markovianity, since several entanglement quantifiers can be written in an unique formalism. Furthermore, our measure is defined in terms of the quantum process which brings advantages when compared with the density matrix dynamics. 
Indeed, the proposed method can be efficient since it allows the characterization the Markovianity 
of a quantum process, in two intervals of time, by the direct measure of the expectation value of a hermitian operator. 
Besides the operational and experimental advantages of the proposed approach, we also introduce a {\it bona fide} 
measure of non - Markovianity. We also 
proved, in Theorem \ref{t1}, the continuity of the measure, which enable the study of the crossover between Markovian 
and non-Markovian processes. 
Then, we illustrate our results calculating the degree of non-Markovianity of two trapped cold atoms interacting with an 
ultracold bosonic rubidium gas in a Bose-Einstein condensate (BEC). This finding opens up new avenues in the 
investigation of quantum non-Markovian dynamics, providing an efficient way to study dissipative processes for arbitrary dimensional systems. 

\section*{Acknowledgments }
This work has been supported by the Brazilian agencies FAPESP under the grant number 2015/05581-7, by CNPq under the grant number 474592/2013-8, and by INCT-IQ  under the process number 2008/57856-6. The authors also thank M. Terra Cunha,  S. Maniscalco, R. Vianna and G. Karpat for fruitful discussions. 

\section*{Appendix -  Proof of Property (5)}
To prove the continuity  of $\M{N}(t_1,t_2)$ we first introduce the {\it diamond norm}, that characterizes the notion of distance (distinguishability) of 
quantum maps. 
\begin{definition}[Diamond Norm]\label{diamond}
Given a CPTP channel $\Phi_S\in \M{C}(\B{C}_S)$, its diamond norm is defined as: 
\begin{equation}
\N{\Phi_S}_{\diamond} = \max_{\ket{\psi}\in\B{C}_S\otimes\B{C}_A} \N{ \Phi_S\otimes\id_A(\psi)}_1, 
\end{equation}
where $\psi = \ketbra{\psi}{\psi}$.
\end{definition}

The continuity is proved by of the following theorem: 
\begin{theoremn}
Consider two CPTP channel depending on time $\Phi_{t},\Lambda_{t}\in\M{C}(\B{C}_S)$, for $t\in[0,\infty)$, the following inequality holds: 
\begin{equation*}
|\M{N}_{\Phi}(t_1,t_2) - \M{N}_{\Lambda}(t_1,t_2)| \leq d \left( \N{\Phi_{t_1} - \Lambda_{t_1}}_{\diamond}
 + \N{\Phi_{t_2} -\Lambda_{t_2}}_{\diamond} \right),
\end{equation*}
where $ \N{\Phi_{t} - \Lambda_{t}}_{\diamond}$ is the diamond distance of $\Phi_{t}$ and $\Lambda_{t}$, that is a continuous function. 
\end{theoremn}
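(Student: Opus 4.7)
The plan is to reduce the continuity bound to three standard pieces: (i) the Lipschitz property of $\max\{0,\cdot\}$, (ii) a Lipschitz bound for the generalized robustness $R_G$ in trace norm, and (iii) the fact that the diamond norm controls how close the Choi states produced by two maps can be. Recall that by the manipulation leading to Eq.~\eqref{witnmar}, one has $-\bra{\phi_{+}}W_{\Phi}(t_2,t_1)\ket{\phi_{+}}=R_G(\rho_{\Phi}(t_2))-R_G(\rho_{\Phi}(t_1))$, with $\rho_{\Phi}(t)=\Phi_{t}\otimes\id(\phi_{+})$, and analogously for $\Lambda$. So $\M{N}_{\Phi}(t_1,t_2)=\max\{0,R_G(\rho_{\Phi}(t_2))-R_G(\rho_{\Phi}(t_1))\}$, and likewise for $\Lambda$.

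First I would apply the elementary inequality $|\max\{0,a\}-\max\{0,b\}|\leq|a-b|$, together with the triangle inequality, to bound $|\M{N}_{\Phi}(t_1,t_2)-\M{N}_{\Lambda}(t_1,t_2)|$ by $|R_G(\rho_{\Phi}(t_2))-R_G(\rho_{\Lambda}(t_2))|+|R_G(\rho_{\Phi}(t_1))-R_G(\rho_{\Lambda}(t_1))|$. This converts the problem into controlling how much the generalized robustness changes under a small perturbation of its argument, at each time $t_1,t_2$ separately.

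Next I would establish a trace-norm Lipschitz estimate $|R_G(\rho)-R_G(\sigma)|\leq c_{d}\,\N{\rho-\sigma}_{1}$. The witnessed-entanglement form $R_G(\rho)=\max\{0,-\min_{W\in\M{W}}\Tr(W\rho)\}$ with $\M{W}=\{W=W^{\dagger}: W\leq\id,\ \Tr(W\sigma)\geq0\ \forall\sigma\in\mathrm{Sep}\}$ allows the standard max-min argument: picking the optimal witness $W_{\rho}^{*}$ for $\rho$ as a suboptimal choice for $\sigma$ (and vice versa) yields $|R_G(\rho)-R_G(\sigma)|\leq\sup_{W\in\M{W}}|\Tr(W(\rho-\sigma))|\leq\sup_{W\in\M{W}}\N{W}_{\infty}\N{\rho-\sigma}_{1}$. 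The constant comes from bounding $\N{W}_{\infty}$ on $\M{W}$: since $\id/d^{2}\in\mathrm{Sep}$, every $W\in\M{W}$ has nonnegative trace, and combined with $W\leq\id$ this forces $W\geq-(d^{2}-1)\id$; hence $\N{W}_{\infty}\leq d^{2}-1<d^{2}$, giving $c_{d}=d^{2}$.

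Finally I would bound $\N{\rho_{\Phi}(t)-\rho_{\Lambda}(t)}_{1}=\N{(\Phi_{t}-\Lambda_{t})\otimes\id(\phi_{+})}_{1}\leq\N{\Phi_{t}-\Lambda_{t}}_{\diamond}$, which is immediate from Definition~\ref{diamond} since $\phi_{+}$ is a pure state on $\B{C}_{S}\otimes\B{C}_{A}$. Chaining this with the Lipschitz estimate of the previous step, applied at $t_1$ and $t_2$, yields the claimed inequality with constant $d^{2}$ in front of the sum of diamond norms. Continuity in $t$ then follows from the continuity of $t\mapsto\N{\Phi_{t}-\Lambda_{t}}_{\diamond}$ for continuous families of CPTP maps.

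\textbf{Main obstacle.} The delicate part is pinning down the Lipschitz constant $c_{d}$: the bound $W\leq\id$ only constrains one side of the spectrum, so one has to exploit the witness constraint $\Tr(W\sigma)\geq0$ on separable states (in particular on the maximally mixed state) to control the negative eigenvalues, and thereby obtain the dimension-dependent operator-norm bound that drives the prefactor in the final inequality. Everything else is bookkeeping with the triangle inequality, the max-min estimate for convex programs, and the defining property of the diamond norm.
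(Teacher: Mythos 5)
Your proposal is correct and follows essentially the same route as the paper's appendix proof: split the difference by the triangle inequality into a contribution at $t_1$ and one at $t_2$, use the optimal witness of one Choi state as a suboptimal witness for the other, apply H\"older, and finish with the definition of the diamond norm. It is worth noting, however, that your execution is more careful than the paper's own in exactly the two places where that argument is shaky. First, you control the outer $\max\{0,\cdot\}$ via its $1$-Lipschitz property $|\max\{0,a\}-\max\{0,b\}|\leq|a-b|$, whereas the paper silently replaces the difference of the two maxima by the difference of their arguments. Second, you use the legitimate H\"older pairing $\N{W}_{\infty}\N{\rho-\sigma}_{1}$ and then bound $\N{W}_{\infty}\leq d^{2}-1$ by combining $W\leq\id$ with $\Tr(W)\geq 0$ (from $\id/d^{2}$ being separable); the paper instead pairs two trace norms, which is not a H\"older pair, and asserts $\N{W_t}_{1}\leq\N{\id}_{1}=d^{2}$, which does not follow from $W\leq\id$ alone. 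Your final constant $d^{2}$ agrees with the main-text statement of Theorem 1 (the appendix version states $d$, an internal inconsistency of the paper), and your closing step $\N{(\Phi_t-\Lambda_t)\otimes\id(\phi_{+})}_{1}\leq\N{\Phi_t-\Lambda_t}_{\diamond}$ is exactly the paper's. In short: same skeleton, but your version actually repairs the paper's proof rather than merely reproducing it.
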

\begin{proof}
Given the definition of witnessed non - Markovianity $\M{N}(t_1,t_2) = \max\{0, - \bra{\phi_{+}}(W(t_2,t_1)\ket{\phi_{+}}\}$, then:
\begin{align}
|\M{N}_{\Phi}(t_1,t_2) - \M{N}_{\Lambda}(t_1,t_2)| =& |\bra{\phi_{+}}(W(t_2,t_1)\ket{\phi_{+}}\\ &- \bra{\phi_{+}}(M(t_2,t_1)\ket{\phi_{+}}|, 
\end{align} 
where $W(t_2,t_1)= [(\Phi^{\dagger}_{t_2}\otimes\id)(W_{t_2})-(\Phi^{\dagger}_{t_1}\otimes\id)(W_{t_1})]$ and 
$M(t_2,t_1)= [(\Lambda^{\dagger}_{t_2}\otimes\id)(M_{t_2})-(\Lambda^{\dagger}_{t_1}\otimes\id)(M_{t_1})]$. Therefore applying the triangle inequality:  
\begin{align}
&|\M{N}_{\Phi}(t_1,t_2) - \M{N}_{\Lambda}(t_1,t_2)| \\
\leq &\left| \Tr\left\{[ \Phi^{\dagger}_{t_1}\otimes\id(W_{t_1}) -  \Lambda^{\dagger}_{t_1}\otimes\id(M_{t_1})]\phi_{+} \right\}  \right| \\
 &+ \left| \Tr\left\{ [\Phi^{\dagger}_{t_2}\otimes\id(W_{t_2}) -  \Lambda^{\dagger}_{t_2}\otimes\id(M_{t_2}) ]\phi_{+}\right\}  \right|.  
\end{align} 
Although each term of the sum respects: 
\begin{align}
&\left| \Tr\left\{[ \Phi^{\dagger}_{t}\otimes\id(W_{t}) -  \Lambda^{\dagger}_{t}\otimes\id(M_{t})]\phi_{+} \right\}  \right|  \\
=& \left| \Tr\left\{[ W_{t} \Phi_{t}\otimes\id(\phi_{+}) - M_{t} \Lambda_{t}\otimes\id(\phi_{+}) \right\}  \right|\\ \label{wm}
\leq & \left| \Tr\left\{[ W_{t} \Phi_{t}\otimes\id(\phi_{+}) - W_{t} \Lambda_{t}\otimes\id(\phi_{+}) \right\}  \right|\\ \label{holder} 
\leq & \N{W_{t}}_{1} \N{\Phi_{t}\otimes\id(\phi_{+}) - \Lambda_{t}\otimes\id(\phi_{+})}_1 \\ \label{defi} 
\leq & \N{W_{t}}_{1} \N{\Phi_{t}\otimes\id(\phi_{+}) - \Lambda_{t}\otimes\id(\phi_{+})}_{\diamond}.
\end{align}
Where in Eq.\eqref{wm} we have used that $M_t$ is the optimal entanglement witness for $\Lambda_{t}\otimes\id(\phi_{+})$, then 
$\Tr(M_{t} \Lambda_{t}\otimes\id(\phi_{+}))\leq \Tr(W_{t} \Lambda_{t}\otimes\id(\phi_{+})) $; in Eq.\eqref{holder} we have used the Holder inequality: 
$\Tr(AB)\leq \N{A}_p\N{B}_q$ for $\frac{1}{q}+\frac{1}{p} =1$ and $A$ and $B$ hermitian; in Eq.\eqref{defi} we have apply the definition of 
diamond norm. As the entanglement based measure of non - Markonianity is defined by means of generalized robustness of entanglement, 
the optimal entanglement witness satisfies $W\leq \id$ \cite{brandaopra2005}, then $\N{\id}_1=d^2$, which proves the theorem.  
\end{proof}

%

\begin{thebibliography}{31}%
\makeatletter
\providecommand \@ifxundefined [1]{%
 \@ifx{#1\undefined}
}%
\providecommand \@ifnum [1]{%
 \ifnum #1\expandafter \@firstoftwo
 \else \expandafter \@secondoftwo
 \fi
}%
\providecommand \@ifx [1]{%
 \ifx #1\expandafter \@firstoftwo
 \else \expandafter \@secondoftwo
 \fi
}%
\providecommand \natexlab [1]{#1}%
\providecommand \enquote  [1]{``#1''}%
\providecommand \bibnamefont  [1]{#1}%
\providecommand \bibfnamefont [1]{#1}%
\providecommand \citenamefont [1]{#1}%
\providecommand \href@noop [0]{\@secondoftwo}%
\providecommand \href [0]{\begingroup \@sanitize@url \@href}%
\providecommand \@href[1]{\@@startlink{#1}\@@href}%
\providecommand \@@href[1]{\endgroup#1\@@endlink}%
\providecommand \@sanitize@url [0]{\catcode `\\12\catcode `\$12\catcode
  `\&12\catcode `\#12\catcode `\^12\catcode `\_12\catcode `\%12\relax}%
\providecommand \@@startlink[1]{}%
\providecommand \@@endlink[0]{}%
\providecommand \url  [0]{\begingroup\@sanitize@url \@url }%
\providecommand \@url [1]{\endgroup\@href {#1}{\urlprefix }}%
\providecommand \urlprefix  [0]{URL }%
\providecommand \Eprint [0]{\href }%
\providecommand \doibase [0]{http://dx.doi.org/}%
\providecommand \selectlanguage [0]{\@gobble}%
\providecommand \bibinfo  [0]{\@secondoftwo}%
\providecommand \bibfield  [0]{\@secondoftwo}%
\providecommand \translation [1]{[#1]}%
\providecommand \BibitemOpen [0]{}%
\providecommand \bibitemStop [0]{}%
\providecommand \bibitemNoStop [0]{.\EOS\space}%
\providecommand \EOS [0]{\spacefactor3000\relax}%
\providecommand \BibitemShut  [1]{\csname bibitem#1\endcsname}%
\let\auto@bib@innerbib\@empty
\bibitem [{\citenamefont {Breuer}\ \emph {et~al.}(2016)\citenamefont {Breuer},
  \citenamefont {Laine}, \citenamefont {Piilo},\ and\ \citenamefont
  {Vacchini}}]{breuer16}%
  \BibitemOpen
  \bibfield  {author} {\bibinfo {author} {\bibfnamefont {H.-P.}\ \bibnamefont
  {Breuer}}, \bibinfo {author} {\bibfnamefont {E.-M.}\ \bibnamefont {Laine}},
  \bibinfo {author} {\bibfnamefont {J.}~\bibnamefont {Piilo}}, \ and\ \bibinfo
  {author} {\bibfnamefont {B.}~\bibnamefont {Vacchini}},\ }\href {\doibase
  10.1103/RevModPhys.88.021002} {\bibfield  {journal} {\bibinfo  {journal}
  {Rev. Mod. Phys.}\ }\textbf {\bibinfo {volume} {88}},\ \bibinfo {pages}
  {021002} (\bibinfo {year} {2016})}\BibitemShut {NoStop}%
\bibitem [{\citenamefont {Nielsen}\ and\ \citenamefont
  {Chuang}(2000)}]{chuang}%
  \BibitemOpen
  \bibfield  {author} {\bibinfo {author} {\bibfnamefont {M.}~\bibnamefont
  {Nielsen}}\ and\ \bibinfo {author} {\bibfnamefont {I.}~\bibnamefont
  {Chuang}},\ }\href {http://books.google.com.br/books?id=65FqEKQOfP8C} {\emph
  {\bibinfo {title} {{Quantum Computation and Quantum Information}}}},\
  {Cambridge Series on Information and the Natural Sciences}\ (\bibinfo
  {publisher} {Cambridge University Press},\ \bibinfo {year}
  {2000})\BibitemShut {NoStop}%
\bibitem [{\citenamefont {Breuer}\ \emph {et~al.}(2009)\citenamefont {Breuer},
  \citenamefont {Laine},\ and\ \citenamefont {Piilo}}]{breuer09}%
  \BibitemOpen
  \bibfield  {author} {\bibinfo {author} {\bibfnamefont {H.-P.}\ \bibnamefont
  {Breuer}}, \bibinfo {author} {\bibfnamefont {E.-M.}\ \bibnamefont {Laine}}, \
  and\ \bibinfo {author} {\bibfnamefont {J.}~\bibnamefont {Piilo}},\
  }\href@noop {} {\bibfield  {journal} {\bibinfo  {journal} {Physical review
  letters}\ }\textbf {\bibinfo {volume} {103}},\ \bibinfo {pages} {210401}
  (\bibinfo {year} {2009})}\BibitemShut {NoStop}%
\bibitem [{\citenamefont {Rivas}\ \emph
  {et~al.}(2010{\natexlab{a}})\citenamefont {Rivas}, \citenamefont {Huelga},\
  and\ \citenamefont {Plenio}}]{plenioprl10}%
  \BibitemOpen
  \bibfield  {author} {\bibinfo {author} {\bibfnamefont {{\'A}.}~\bibnamefont
  {Rivas}}, \bibinfo {author} {\bibfnamefont {S.~F.}\ \bibnamefont {Huelga}}, \
  and\ \bibinfo {author} {\bibfnamefont {M.~B.}\ \bibnamefont {Plenio}},\
  }\href {\doibase 10.1103/PhysRevLett.105.050403} {\bibfield  {journal}
  {\bibinfo  {journal} {Phys. Rev. Lett.}\ }\textbf {\bibinfo {volume} {105}},\
  \bibinfo {pages} {050403} (\bibinfo {year} {2010}{\natexlab{a}})}\BibitemShut
  {NoStop}%
\bibitem [{\citenamefont {Luo}\ \emph {et~al.}(2012)\citenamefont {Luo},
  \citenamefont {Fu},\ and\ \citenamefont {Song}}]{luo12}%
  \BibitemOpen
  \bibfield  {author} {\bibinfo {author} {\bibfnamefont {S.}~\bibnamefont
  {Luo}}, \bibinfo {author} {\bibfnamefont {S.}~\bibnamefont {Fu}}, \ and\
  \bibinfo {author} {\bibfnamefont {H.}~\bibnamefont {Song}},\ }\href@noop {}
  {\bibfield  {journal} {\bibinfo  {journal} {Physical Review A}\ }\textbf
  {\bibinfo {volume} {86}},\ \bibinfo {pages} {044101} (\bibinfo {year}
  {2012})}\BibitemShut {NoStop}%
\bibitem [{\citenamefont {Rivas}\ \emph
  {et~al.}(2010{\natexlab{b}})\citenamefont {Rivas}, \citenamefont {Huelga},\
  and\ \citenamefont {Plenio}}]{nm1}%
  \BibitemOpen
  \bibfield  {author} {\bibinfo {author} {\bibfnamefont {A.}~\bibnamefont
  {Rivas}}, \bibinfo {author} {\bibfnamefont {S.~F.}\ \bibnamefont {Huelga}}, \
  and\ \bibinfo {author} {\bibfnamefont {M.~B.}\ \bibnamefont {Plenio}},\
  }\href@noop {} {\bibfield  {journal} {\bibinfo  {journal} {Physical Review
  Letters}\ }\textbf {\bibinfo {volume} {105}},\ \bibinfo {pages} {05403}
  (\bibinfo {year} {2010}{\natexlab{b}})}\BibitemShut {NoStop}%
\bibitem [{\citenamefont {Laine}\ \emph {et~al.}(2010)\citenamefont {Laine},
  \citenamefont {Piilo},\ and\ \citenamefont {Breuer}}]{nm2}%
  \BibitemOpen
  \bibfield  {author} {\bibinfo {author} {\bibfnamefont {E.-M.}\ \bibnamefont
  {Laine}}, \bibinfo {author} {\bibfnamefont {J.}~\bibnamefont {Piilo}}, \ and\
  \bibinfo {author} {\bibfnamefont {H.-P.}\ \bibnamefont {Breuer}},\
  }\href@noop {} {\bibfield  {journal} {\bibinfo  {journal} {Physical Review
  A}\ }\textbf {\bibinfo {volume} {81}},\ \bibinfo {pages} {062115} (\bibinfo
  {year} {2010})}\BibitemShut {NoStop}%
\bibitem [{\citenamefont {Addis}\ \emph {et~al.}(2014)\citenamefont {Addis},
  \citenamefont {Bylicka}, \citenamefont {Chru{\'s}ci{\'n}ski},\ and\
  \citenamefont {Maniscalco}}]{nm3}%
  \BibitemOpen
  \bibfield  {author} {\bibinfo {author} {\bibfnamefont {C.}~\bibnamefont
  {Addis}}, \bibinfo {author} {\bibfnamefont {B.}~\bibnamefont {Bylicka}},
  \bibinfo {author} {\bibfnamefont {D.}~\bibnamefont {Chru{\'s}ci{\'n}ski}}, \
  and\ \bibinfo {author} {\bibfnamefont {S.}~\bibnamefont {Maniscalco}},\
  }\href@noop {} {\bibfield  {journal} {\bibinfo  {journal} {Physical Review
  A}\ }\textbf {\bibinfo {volume} {90}},\ \bibinfo {pages} {052103} (\bibinfo
  {year} {2014})}\BibitemShut {NoStop}%
\bibitem [{\citenamefont {Neto}\ \emph {et~al.}(2016)\citenamefont {Neto},
  \citenamefont {Karpat},\ and\ \citenamefont {Fanchini}}]{neto16}%
  \BibitemOpen
  \bibfield  {author} {\bibinfo {author} {\bibfnamefont {A.~C.}\ \bibnamefont
  {Neto}}, \bibinfo {author} {\bibfnamefont {G.}~\bibnamefont {Karpat}}, \ and\
  \bibinfo {author} {\bibfnamefont {F.~F.}\ \bibnamefont {Fanchini}},\
  }\href@noop {} {\bibfield  {journal} {\bibinfo  {journal} {Physical Review
  A}\ }\textbf {\bibinfo {volume} {94}},\ \bibinfo {pages} {032105} (\bibinfo
  {year} {2016})}\BibitemShut {NoStop}%
\bibitem [{\citenamefont {Brand{\~a}o}(2005)}]{brandaopra2005}%
  \BibitemOpen
  \bibfield  {author} {\bibinfo {author} {\bibfnamefont {F.~G. S.~L.}\
  \bibnamefont {Brand{\~a}o}},\ }\href {\doibase 10.1103/PhysRevA.72.022310}
  {\bibfield  {journal} {\bibinfo  {journal} {Phys. Rev. A}\ }\textbf {\bibinfo
  {volume} {72}},\ \bibinfo {pages} {022310} (\bibinfo {year}
  {2005})}\BibitemShut {NoStop}%
\bibitem [{\citenamefont {Brandao}\ and\ \citenamefont
  {Vianna}(2004)}]{rsdpreinaldo}%
  \BibitemOpen
  \bibfield  {author} {\bibinfo {author} {\bibfnamefont {F.~G. S.~L.}\
  \bibnamefont {Brandao}}\ and\ \bibinfo {author} {\bibfnamefont {R.~O.}\
  \bibnamefont {Vianna}},\ }\href {\doibase 10.1103/PhysRevA.70.062309}
  {\bibfield  {journal} {\bibinfo  {journal} {Phys. Rev. A}\ }\textbf {\bibinfo
  {volume} {70}},\ \bibinfo {pages} {062309} (\bibinfo {year} {2004})},\
  \Eprint {http://arxiv.org/abs/quant-ph/0405008v2} {arXiv:quant-ph/0405008v2
  [quant-ph]} \BibitemShut {NoStop}%
\bibitem [{\citenamefont {Chru{\'s}ci{\'n}ski}\ \emph
  {et~al.}(2017)\citenamefont {Chru{\'s}ci{\'n}ski}, \citenamefont
  {Macchiavello},\ and\ \citenamefont {Maniscalco}}]{sabrina17}%
  \BibitemOpen
  \bibfield  {author} {\bibinfo {author} {\bibfnamefont {D.}~\bibnamefont
  {Chru{\'s}ci{\'n}ski}}, \bibinfo {author} {\bibfnamefont {C.}~\bibnamefont
  {Macchiavello}}, \ and\ \bibinfo {author} {\bibfnamefont {S.}~\bibnamefont
  {Maniscalco}},\ }\href {\doibase 10.1103/PhysRevLett.118.080404} {\bibfield
  {journal} {\bibinfo  {journal} {Phys. Rev. Lett.}\ }\textbf {\bibinfo
  {volume} {118}},\ \bibinfo {pages} {080404} (\bibinfo {year}
  {2017})}\BibitemShut {NoStop}%
\bibitem [{\citenamefont {Horodecki}\ \emph
  {et~al.}(1996{\natexlab{a}})\citenamefont {Horodecki}, \citenamefont
  {Horodecki},\ and\ \citenamefont {Horodecki}}]{horodeckiew}%
  \BibitemOpen
  \bibfield  {author} {\bibinfo {author} {\bibfnamefont {M.}~\bibnamefont
  {Horodecki}}, \bibinfo {author} {\bibfnamefont {P.}~\bibnamefont
  {Horodecki}}, \ and\ \bibinfo {author} {\bibfnamefont {R.}~\bibnamefont
  {Horodecki}},\ }\href@noop {} {\bibfield  {journal} {\bibinfo  {journal}
  {Physics Letters A}\ }\textbf {\bibinfo {volume} {223}},\ \bibinfo {pages}
  {1} (\bibinfo {year} {1996}{\natexlab{a}})}\BibitemShut {NoStop}%
\bibitem [{\citenamefont {Terhal}(2000)}]{terhal00}%
  \BibitemOpen
  \bibfield  {author} {\bibinfo {author} {\bibfnamefont {B.~M.}\ \bibnamefont
  {Terhal}},\ }\href {\doibase 10.1016/S0375-9601(00)00401-1} {\bibfield
  {journal} {\bibinfo  {journal} {Physics Letters A Vol. 271, 319 (2000)}\ }
  (\bibinfo {year} {2000}),\ 10.1016/S0375-9601(00)00401-1},\ \Eprint
  {http://arxiv.org/abs/quant-ph/9911057v4} {arXiv:quant-ph/9911057v4
  [quant-ph]} \BibitemShut {NoStop}%
\bibitem [{\citenamefont {Lewenstein}\ \emph {et~al.}(2000)\citenamefont
  {Lewenstein}, \citenamefont {Kraus}, \citenamefont {Cirac},\ and\
  \citenamefont {Horodecki}}]{oew}%
  \BibitemOpen
  \bibfield  {author} {\bibinfo {author} {\bibfnamefont {M.}~\bibnamefont
  {Lewenstein}}, \bibinfo {author} {\bibfnamefont {B.}~\bibnamefont {Kraus}},
  \bibinfo {author} {\bibfnamefont {J.~I.}\ \bibnamefont {Cirac}}, \ and\
  \bibinfo {author} {\bibfnamefont {P.}~\bibnamefont {Horodecki}},\ }\href
  {\doibase 10.1103/PhysRevA.62.052310} {\bibfield  {journal} {\bibinfo
  {journal} {Phys. Rev. A}\ }\textbf {\bibinfo {volume} {62}},\ \bibinfo
  {pages} {052310} (\bibinfo {year} {2000})}\BibitemShut {NoStop}%
\bibitem [{\citenamefont {Brand{\~a}o}\ and\ \citenamefont
  {Vianna}(2006)}]{brandao06}%
  \BibitemOpen
  \bibfield  {author} {\bibinfo {author} {\bibfnamefont {F.~G.}\ \bibnamefont
  {Brand{\~a}o}}\ and\ \bibinfo {author} {\bibfnamefont {R.~O.}\ \bibnamefont
  {Vianna}},\ }\href {\doibase 10.1142/S0219749906001803} {\bibfield  {journal}
  {\bibinfo  {journal} {Int. J. of Quant. Inf.}\ }\textbf {\bibinfo {volume}
  {4}},\ \bibinfo {pages} {331} (\bibinfo {year} {2006})}\BibitemShut {NoStop}%
\bibitem [{\citenamefont {Brand{\~a}o}\ and\ \citenamefont
  {Vianna}(2004{\natexlab{a}})}]{brprl04}%
  \BibitemOpen
  \bibfield  {author} {\bibinfo {author} {\bibfnamefont {F.~G. S.~L.}\
  \bibnamefont {Brand{\~a}o}}\ and\ \bibinfo {author} {\bibfnamefont {R.~O.}\
  \bibnamefont {Vianna}},\ }\href {\doibase 10.1103/PhysRevLett.93.220503}
  {\bibfield  {journal} {\bibinfo  {journal} {Phys. Rev. Lett.}\ }\textbf
  {\bibinfo {volume} {93}},\ \bibinfo {pages} {220503} (\bibinfo {year}
  {2004}{\natexlab{a}})}\BibitemShut {NoStop}%
\bibitem [{\citenamefont {Steiner}(2003)}]{steiner03}%
  \BibitemOpen
  \bibfield  {author} {\bibinfo {author} {\bibfnamefont {M.}~\bibnamefont
  {Steiner}},\ }\href@noop {} {\bibfield  {journal} {\bibinfo  {journal}
  {Physical Review A}\ }\textbf {\bibinfo {volume} {67}},\ \bibinfo {pages}
  {054305} (\bibinfo {year} {2003})}\BibitemShut {NoStop}%
\bibitem [{\citenamefont {Vidal}\ and\ \citenamefont {Tarrach}(1999)}]{vida99}%
  \BibitemOpen
  \bibfield  {author} {\bibinfo {author} {\bibfnamefont {G.}~\bibnamefont
  {Vidal}}\ and\ \bibinfo {author} {\bibfnamefont {R.}~\bibnamefont
  {Tarrach}},\ }\href@noop {} {\bibfield  {journal} {\bibinfo  {journal}
  {Physical Review A}\ }\textbf {\bibinfo {volume} {59}},\ \bibinfo {pages}
  {141} (\bibinfo {year} {1999})}\BibitemShut {NoStop}%
\bibitem [{\citenamefont {Harrow}\ and\ \citenamefont
  {Nielsen}(2003)}]{harrow03}%
  \BibitemOpen
  \bibfield  {author} {\bibinfo {author} {\bibfnamefont {A.~W.}\ \bibnamefont
  {Harrow}}\ and\ \bibinfo {author} {\bibfnamefont {M.~A.}\ \bibnamefont
  {Nielsen}},\ }\href@noop {} {\bibfield  {journal} {\bibinfo  {journal}
  {Physical Review A}\ }\textbf {\bibinfo {volume} {68}},\ \bibinfo {pages}
  {012308} (\bibinfo {year} {2003})}\BibitemShut {NoStop}%
\bibitem [{\citenamefont {Brand{\~a}o}\ and\ \citenamefont
  {Vianna}(2004{\natexlab{b}})}]{brpra2004}%
  \BibitemOpen
  \bibfield  {author} {\bibinfo {author} {\bibfnamefont {F.~G. S.~L.}\
  \bibnamefont {Brand{\~a}o}}\ and\ \bibinfo {author} {\bibfnamefont {R.~O.}\
  \bibnamefont {Vianna}},\ }\href {\doibase 10.1103/PhysRevA.70.062309}
  {\bibfield  {journal} {\bibinfo  {journal} {Phys. Rev. A}\ }\textbf {\bibinfo
  {volume} {70}},\ \bibinfo {pages} {062309} (\bibinfo {year}
  {2004}{\natexlab{b}})}\BibitemShut {NoStop}%
\bibitem [{\citenamefont {Boyd}\ and\ \citenamefont
  {Vandenberghe}(2004)}]{boyd}%
  \BibitemOpen
  \bibfield  {author} {\bibinfo {author} {\bibfnamefont {S.}~\bibnamefont
  {Boyd}}\ and\ \bibinfo {author} {\bibfnamefont {L.}~\bibnamefont
  {Vandenberghe}},\ }\href {http://books.google.com.br/books?id=mYm0bLd3fcoC}
  {\emph {\bibinfo {title} {{Convex Optimization}}}},\ {Berichte {\"u}ber
  verteilte messysteme}\ (\bibinfo  {publisher} {Cambridge University Press},\
  \bibinfo {year} {2004})\BibitemShut {NoStop}%
\bibitem [{\citenamefont {Horodecki}\ \emph
  {et~al.}(1996{\natexlab{b}})\citenamefont {Horodecki}, \citenamefont
  {Horodecki},\ and\ \citenamefont {Horodecki}}]{witnesshorodecki}%
  \BibitemOpen
  \bibfield  {author} {\bibinfo {author} {\bibfnamefont {M.}~\bibnamefont
  {Horodecki}}, \bibinfo {author} {\bibfnamefont {P.}~\bibnamefont
  {Horodecki}}, \ and\ \bibinfo {author} {\bibfnamefont {R.}~\bibnamefont
  {Horodecki}},\ }\href {\doibase 10.1016/S0375-9601(96)00706-2} {\bibfield
  {journal} {\bibinfo  {journal} {Physics Letters A}\ }\textbf {\bibinfo
  {volume} {223}},\ \bibinfo {pages} {1} (\bibinfo {year}
  {1996}{\natexlab{b}})}\BibitemShut {NoStop}%
\bibitem [{\citenamefont {Horodecki}\ and\ \citenamefont
  {Ekert}(2002)}]{PhysRevLett.89.127902}%
  \BibitemOpen
  \bibfield  {author} {\bibinfo {author} {\bibfnamefont {P.}~\bibnamefont
  {Horodecki}}\ and\ \bibinfo {author} {\bibfnamefont {A.}~\bibnamefont
  {Ekert}},\ }\href {\doibase 10.1103/PhysRevLett.89.127902} {\bibfield
  {journal} {\bibinfo  {journal} {Phys. Rev. Lett.}\ }\textbf {\bibinfo
  {volume} {89}},\ \bibinfo {pages} {127902} (\bibinfo {year}
  {2002})}\BibitemShut {NoStop}%
\bibitem [{\citenamefont {Carteret}\ \emph {et~al.}(2008)\citenamefont
  {Carteret}, \citenamefont {Terno},\ and\ \citenamefont {\ifmmode~\dot{Z}\else
  \.{Z}\fi{}yczkowski}}]{PhysRevA.77.042113}%
  \BibitemOpen
  \bibfield  {author} {\bibinfo {author} {\bibfnamefont {H.~A.}\ \bibnamefont
  {Carteret}}, \bibinfo {author} {\bibfnamefont {D.~R.}\ \bibnamefont {Terno}},
  \ and\ \bibinfo {author} {\bibfnamefont {K.}~\bibnamefont
  {\ifmmode~\dot{Z}\else \.{Z}\fi{}yczkowski}},\ }\href {\doibase
  10.1103/PhysRevA.77.042113} {\bibfield  {journal} {\bibinfo  {journal} {Phys.
  Rev. A}\ }\textbf {\bibinfo {volume} {77}},\ \bibinfo {pages} {042113}
  (\bibinfo {year} {2008})}\BibitemShut {NoStop}%
\bibitem [{\citenamefont {Lim}\ \emph {et~al.}(2011)\citenamefont {Lim},
  \citenamefont {Kim}, \citenamefont {Ra}, \citenamefont {Bae},\ and\
  \citenamefont {Kim}}]{PhysRevLett.107.160401}%
  \BibitemOpen
  \bibfield  {author} {\bibinfo {author} {\bibfnamefont {H.-T.}\ \bibnamefont
  {Lim}}, \bibinfo {author} {\bibfnamefont {Y.-S.}\ \bibnamefont {Kim}},
  \bibinfo {author} {\bibfnamefont {Y.-S.}\ \bibnamefont {Ra}}, \bibinfo
  {author} {\bibfnamefont {J.}~\bibnamefont {Bae}}, \ and\ \bibinfo {author}
  {\bibfnamefont {Y.-H.}\ \bibnamefont {Kim}},\ }\href {\doibase
  10.1103/PhysRevLett.107.160401} {\bibfield  {journal} {\bibinfo  {journal}
  {Phys. Rev. Lett.}\ }\textbf {\bibinfo {volume} {107}},\ \bibinfo {pages}
  {160401} (\bibinfo {year} {2011})}\BibitemShut {NoStop}%
\bibitem [{\citenamefont {Wolf}\ \emph {et~al.}(2008)\citenamefont {Wolf},
  \citenamefont {Eisert}, \citenamefont {Cubitt},\ and\ \citenamefont
  {Cirac}}]{wolf08}%
  \BibitemOpen
  \bibfield  {author} {\bibinfo {author} {\bibfnamefont {M.~M.}\ \bibnamefont
  {Wolf}}, \bibinfo {author} {\bibfnamefont {J.}~\bibnamefont {Eisert}},
  \bibinfo {author} {\bibfnamefont {T.~S.}\ \bibnamefont {Cubitt}}, \ and\
  \bibinfo {author} {\bibfnamefont {J.~I.}\ \bibnamefont {Cirac}},\ }\href
  {\doibase 10.1103/PhysRevLett.101.150402} {\bibfield  {journal} {\bibinfo
  {journal} {Phys. Rev. Lett.}\ }\textbf {\bibinfo {volume} {101}},\ \bibinfo
  {pages} {150402} (\bibinfo {year} {2008})}\BibitemShut {NoStop}%
\bibitem [{\citenamefont {Wolf}\ and\ \citenamefont {Cirac}(2008)}]{cirac08}%
  \BibitemOpen
  \bibfield  {author} {\bibinfo {author} {\bibfnamefont {M.~M.}\ \bibnamefont
  {Wolf}}\ and\ \bibinfo {author} {\bibfnamefont {J.~I.}\ \bibnamefont
  {Cirac}},\ }\href {\doibase 10.1007/s00220-008-0411-y} {\bibfield  {journal}
  {\bibinfo  {journal} {Communications in Mathematical Physics}\ }\textbf
  {\bibinfo {volume} {279}},\ \bibinfo {pages} {147} (\bibinfo {year}
  {2008})}\BibitemShut {NoStop}%
\bibitem [{\citenamefont {Haikka}\ \emph {et~al.}(2011)\citenamefont {Haikka},
  \citenamefont {McEndoo}, \citenamefont {{De Chiara}}, \citenamefont {Palma},\
  and\ \citenamefont {Maniscalco}}]{sabrina11}%
  \BibitemOpen
  \bibfield  {author} {\bibinfo {author} {\bibfnamefont {P.}~\bibnamefont
  {Haikka}}, \bibinfo {author} {\bibfnamefont {S.}~\bibnamefont {McEndoo}},
  \bibinfo {author} {\bibfnamefont {G.}~\bibnamefont {{De Chiara}}}, \bibinfo
  {author} {\bibfnamefont {G.~M.}\ \bibnamefont {Palma}}, \ and\ \bibinfo
  {author} {\bibfnamefont {S.}~\bibnamefont {Maniscalco}},\ }\href {\doibase
  10.1103/PhysRevA.84.031602} {\bibfield  {journal} {\bibinfo  {journal} {Phys.
  Rev. A}\ }\textbf {\bibinfo {volume} {84}},\ \bibinfo {pages} {031602}
  (\bibinfo {year} {2011})}\BibitemShut {NoStop}%
\bibitem [{\citenamefont {Addis}\ \emph {et~al.}(2013)\citenamefont {Addis},
  \citenamefont {Haikka}, \citenamefont {McEndoo}, \citenamefont
  {Macchiavello},\ and\ \citenamefont {Maniscalco}}]{sabrina13}%
  \BibitemOpen
  \bibfield  {author} {\bibinfo {author} {\bibfnamefont {C.}~\bibnamefont
  {Addis}}, \bibinfo {author} {\bibfnamefont {P.}~\bibnamefont {Haikka}},
  \bibinfo {author} {\bibfnamefont {S.}~\bibnamefont {McEndoo}}, \bibinfo
  {author} {\bibfnamefont {C.}~\bibnamefont {Macchiavello}}, \ and\ \bibinfo
  {author} {\bibfnamefont {S.}~\bibnamefont {Maniscalco}},\ }\href {\doibase
  10.1103/PhysRevA.87.052109} {\bibfield  {journal} {\bibinfo  {journal} {Phys.
  Rev. A}\ }\textbf {\bibinfo {volume} {87}},\ \bibinfo {pages} {052109}
  (\bibinfo {year} {2013})}\BibitemShut {NoStop}%
\bibitem [{\citenamefont {Cirone}\ \emph {et~al.}(2009)\citenamefont {Cirone},
  \citenamefont {Chiara}, \citenamefont {Palma},\ and\ \citenamefont
  {Recati}}]{Cirone}%
  \BibitemOpen
  \bibfield  {author} {\bibinfo {author} {\bibfnamefont {M.~A.}\ \bibnamefont
  {Cirone}}, \bibinfo {author} {\bibfnamefont {G.~D.}\ \bibnamefont {Chiara}},
  \bibinfo {author} {\bibfnamefont {G.~M.}\ \bibnamefont {Palma}}, \ and\
  \bibinfo {author} {\bibfnamefont {A.}~\bibnamefont {Recati}},\ }\href
  {http://stacks.iop.org/1367-2630/11/i=10/a=103055} {\bibfield  {journal}
  {\bibinfo  {journal} {New Journal of Physics}\ }\textbf {\bibinfo {volume}
  {11}},\ \bibinfo {pages} {103055} (\bibinfo {year} {2009})}\BibitemShut
  {NoStop}%
\end{thebibliography}

%
\end{document}